\documentclass[12pt]{article}
\usepackage{amsmath,amssymb,amsthm,mathrsfs}
\usepackage{array}
\usepackage{graphicx,psfrag,epsf}
\usepackage{multirow}
\usepackage{multicol}
\usepackage{multirow}
\usepackage{booktabs}
\usepackage{dsfont}
\usepackage{appendix}
\usepackage{float}
\usepackage[colorlinks,citecolor=blue,linkcolor=blue,urlcolor=blue]{hyperref}
\usepackage{algorithm,algpseudocode}
\usepackage[bottom=0.5in,top=1.5in]{geometry}

\newcommand{\blind}{1}
\addtolength{\oddsidemargin}{-.5in}%
\addtolength{\evensidemargin}{-.5in}%
\addtolength{\textwidth}{1in}%
\addtolength{\textheight}{0in}
\addtolength{\topmargin}{-.8in}%

\newtheorem{lemma}{Lemma}
\newtheorem{definition}{Definition}

\newtheorem{remark}{Remark}
\DeclareMathOperator*{\argmax}{\arg\!\max}

\begin{document}

\def\spacingset#1{\renewcommand{\baselinestretch}%
{#1}\small\normalsize} \spacingset{0.95}


\if1\blind
{
  \title{\bf Estimating
    heterogeneous treatment effects versus 
    building individualized treatment rules: Connection and disconnection}
  \author{Zhongyuan Chen and Jun Xie \\ \\
   {\it Department of Statistics, Purdue University} \\
     {\it 150 N. University Street, West Lafayette, IN 47907} \\
   junxie@purdue.edu 
    }
  \date{\today}
  \maketitle
} \fi

\if0\blind
{
  \bigskip
  \bigskip
  \bigskip
  \begin{center}
    {\LARGE\bf Estimating heterogeneous treatment effects versus
      building individualized treatment rules} 
\end{center}
  \medskip
} \fi

\spacingset{1.45} 

\begin{abstract}
Estimating heterogeneous treatment effects is a well studied topic in
the statistics literature. More recently, it has regained
attention due to an increasing need for precision medicine as well as
the increased use of state-of-art machine learning methods in the
estimation. Furthermore, estimating heterogeneous treatment effects is
directly related to building an individualized treatment rule, which is
a decision rule of treatment according to patient
characteristics. This paper examines the connection and disconnection
between these two research problems. Notably, a better estimation of the
heterogeneous treatment effects may or may not lead to a better
individualized treatment rule. We provide 
theoretical frameworks to explain the connection and
disconnection and demonstrate two different scenarios through
simulations. Our conclusion sheds light on a practical guide that 
under certain circumstances, there is no need to enhance estimation of
the treatment effects, as it does not alter the treatment 
decision.

\end{abstract}

\noindent
{\it Keywords:} Heterogeneous treatment effects, Individualized
treatment rules, Mean squared error, classification error

\vfill\eject

\section{Introduction}
Estimation of heterogeneous treatment effects (HTE) is a research problem
commonly raised in many fields, including politics, economics, education,
and healthcare. Instead of an overall treatment effect, heterogeneity
exists for subgroups or individuals within a population. We
represent HTE by the conditional average treatment effect (CATE)
function given a set of covariates. There is a rich literature on
estimating the CATE, mostly through regression methods, as well as
more recently developed machine learning algorithms
\cite{Kunzel19}. In theory, the accuracy of a CATE estimator is
measured by the expected mean squared error (EMSE), which 
is based on the quadratic loss function. The convergence rate of the EMSE
can be used to compare different CATE estimators.

Estimating heterogeneous treatment effects is critical for
making medical decisions, such as what treatment to recommend. Building
personalized 
treatment, also referred to as an individualized treatment rule (ITR)
\cite{Qian11}, is usually done through estimating the CATE first and
then defining the optimal ITR as the sign function of the
CATE (for two treatment options denoted as 1 and $-1$). \cite{Chen22}
provided a good introduction on building ITRs and called 
the aforementioned method an indirect approach. Although related,
estimating the CATE and building optimal ITRs are separately studied
in the literature. In this paper, we elaborate upon the connection and
disconnection between the two research areas with mathematical
frameworks. 

In Section 2, we review recent developments on the estimation of the CATE,
including the use of a machine learning
method named X-learner \cite{Kunzel19}, which can improve estimation
accuracy with a faster rate of convergence than other
conventional estimators. In Section 3, we
connect estimating the CATE to building an ITR and show that the
optimal ITR is indeed the sign function of the true CATE function. With this
relationship, it is natural to
expect that better estimators, in terms of smaller EMSE, lead to
improved ITRs. This, however, does not always happen due to the mismatch
between the loss functions of these two research problems. More
specifically, we use the quadratic loss function for estimation of the
CATE but the 0-1 loss function for comparing ITRs. In Section 4, we
examine the disconnection through 
a mathematical framework. That is, for many cases, improving the CATE
estimators does not change the corresponding ITRs. Our conclusion sheds
light on a practical guide that in certain situations, there is no
need to enhance estimation of the CATE, as it does not alter the
treatment decision. Section 5 provides simulation examples to display
the connection and disconnection.

\section{Recent development on estimation of HTE}
Suppose we have data from a
two-arm clinical trial with $(X, A, Y)$, where $Y \in \mathbb{R}$ denotes
a treatment response variable (the larger value the better), $X \in
{\mathbb X} \subset \mathbb{R}^p$ is a set of covariates, and $A \in
{\cal A}=\{-1,1\}$ denotes the treatment index corresponding to the
control or treatment arm. We assume $(X, A, Y) \sim {\cal P}$, where
${\cal P}$ is the distribution from a specific family.
Denote the conditional mean $E(Y|X, A)$. Define
\[\mu_0(x)=E(Y|X=x, A=-1) \hspace{5mm} \text{and} \hspace{5mm}
  \mu_1(x)=E(Y|X=x, A=1).
\]
Then, the CATE function is $\tau(x) = \mu_1(x)-\mu_0(x)$. Let
$\hat{\tau}(x)$ be an estimator of the CATE from a set of independent
random data from ${\cal P}$. We are interested in
estimators with a small expected mean squared error (EMSE):
\[
  \text{EMSE}({\cal P}, \hat{\tau})= E[(\hat{\tau}({\cal X})-\tau({\cal X}))^2],
\]
where the expectation is taken over $\hat{\tau}$ and ${\cal X}$, which are
assumed independent of each other, e.g., $\hat{\tau}$ is estimated
from a training data set and ${\cal X}$ denotes a new data.
 
Various methods are available to estimate the CATE. The most commonly
used one is to fit regression models for $\hat{\mu}_1(x)$ and $\hat{\mu}_0(x)$,
and then $\hat{\tau}(x) =\hat{\mu}_1(x)-\hat{\mu}_0(x)$. As an
addition to the rich literature, \cite{Chen22} considered very general
regression models and applied dimension reduction for high-dimensional
covariates. Moreover,
supervised learning algorithms are used to estimate $\hat{\mu}_1$
and $\hat{\mu}_0$ by the machine learning community \cite{Hu21},
including Bayesian Additive Regression Trees (BART) \cite{Chipman10}
and Random Forest (RF) \cite{Wager18}. One of the most recent
developments is an algorithm called X-learner \cite{Kunzel19}, which
specifically exploits structural properties of the
CATE function for an improved estimator.

Following the results of \cite{Kunzel19}, we use the convergence rates
of the EMSE to compare different CATE 
estimators. Suppose we observe independent and identically distributed data
$(X_i, A_i, Y_i) \sim {\cal P}$, $i=1,\ldots, N$, with $m$ control
units and $n$ treated units, $N=m+n$. Let $n$ denote the smaller
sample size of the two 
treatment arms and assume $m$ and $n$ have a similar scale. Most of the
estimation methods have a convergence rate 
that depends on the estimators of $\hat{\mu}_1$ and
$\hat{\mu}_0$. For instance, for a parametric distribution family,
e.g., $\mu_0$ and $\mu_1$ are parametric functions and $X$ and $Y$
follow parametric distributions, the ordinary least squares estimators
of $\hat{\mu}_1$ and $\hat{\mu}_0$ achieve the optimal EMSE minimax
rates of $\mathcal{O}(n^{-1})$. Therefore, the corresponding
$\hat{\tau}= \hat{\mu}_1-\hat{\mu}_0$ also has the EMSE rate of
$\mathcal{O}(n^{-1})$. In general, for data from a specific distribution family, 
${\cal P} \in S$, we denote the EMSE rate of 
$\hat{\tau}$ as $\mathcal{O}(n^{-a_{\mu}})$, where $0<a_{\mu}\le 1$ is the
EMSE rate of $\hat{\mu}_1$ and $\hat{\mu}_0$.

Through imputation of individual treatment effects, \cite{Kunzel19}
developed a new estimation method of the CATE function, namely the
X-learner. The X-learner uses the observed data to impute the
unobserved individual treatment effects and then directly estimate the
CATE function. Denote the X-learner's convergence rate as
$\mathcal{O}(n^{-a_{\tau}})$. When the CATE function has a simpler
structure than $\mu_0$ and $\mu_1$, e.g., zero or approximately
linear, or the number of observations in one treatment group (usually
the control group) is much larger than that in the other, it is proven that
$a_{\tau}>a_{\mu}$. To conclude, estimators of the CATE can often
be improved in terms of a faster convergence rate of the EMSE. There is
a clear tendency to choose the estimator with the smaller
EMSE.  

\section{Connection between the two research areas}
We have represented HTE by the CATE
function, $\tau(x)=E(Y|X=x, A=1)-E(Y|X=x, A=-1)$. We now introduce
ITR, which is a function $d(x): {\mathbb X} \to {\cal
  A}$. That is, an ITR is a map from the space of covariates, e.g.,
prognostic variables, to the space of treatments. An optimal ITR,
denoted as $d_0(x)$, is the function that gives 
the highest mean response.  

We use the ITR framework from \cite{Qian11, Chen22}. Recall $(X, A, Y)
\sim {\cal P}$. For a given ITR $d$, let ${\cal P}^d$ denote the 
distribution of $(X, A, Y)$ given that $d(X)$ is used to assign
treatments. Then, the expected treatment response under the ITR $d$ is 
\begin{equation} \label{V}
    \int Y d{\cal P}^d = \int Y \frac{d{\cal P}^d}{d{\cal P}} d{\cal P}
    =  \int Y \frac{\mathds{1}_{d(X)=A}}{p(A|X)}d{\cal P} = E\left[Y \frac{\mathds{1}_{d(X)=A}}{p(A|X)}\right],
\end{equation}
where $p(A|X)$ is the treatment assignment probability and is assumed
$>0$. This expectation is called the value function and denoted by 
$V(d)$. Formally, the optimal ITR $d_0$ is the rule that maximizes $V(d)$,
that is, $d_0 \in \argmax_{d} V(d)$.

We can obtain the optimal ITR $d_0$ by estimating the CATE. To see
this, we denote $Q_0(X,A)\triangleq E(Y|X, A)$ and $Q_0(x,a)$ is
sometimes referred to as the treatment response function. The CATE
function is then
\[
  \tau(x)=Q_0(x, 1) - Q_0(x, -1).
\]
  
\begin{lemma}
  Assume $p(a|x)>0$ for $a=1, -1$ and $x \in {\mathbb X}$. The
  optimal ITR $d_0$ is the sign function 
   $d_0(x)={\text{sign}} (\tau(x))$, for $x$ such that $\tau(x) \ne
  0$.
\end{lemma}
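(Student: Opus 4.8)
The plan is to rewrite the value function $V(d)$ purely in terms of the treatment response function $Q_0$ and then optimize pointwise in $x$. Starting from the representation in \eqref{V}, I would first apply the tower property of conditional expectation, conditioning on $(X,A)$, to replace $Y$ by $E(Y\mid X,A)=Q_0(X,A)$:
\[
V(d) = E\left[Q_0(X,A)\,\frac{\mathds{1}_{d(X)=A}}{p(A\mid X)}\right].
\]
The positivity assumption $p(a\mid x)>0$ is exactly what guarantees this expression is well defined, so the hypothesis enters the argument at this point.

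Next I would condition on $X$ alone and carry out the resulting sum over the two treatment values $a\in\{-1,1\}$. Since $\mathds{1}_{d(X)=a}$ selects a single term and the weight $1/p(a\mid X)$ cancels against the conditional probability $p(a\mid X)$, the inner expectation collapses to $Q_0(X,d(X))$, giving the clean identity
\[
V(d) = E\bigl[Q_0(X,d(X))\bigr].
\]
This reduction is the conceptual heart of the connection: the value of a rule depends only on the response it selects at each $x$, with the propensity weights washing out.

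With $V(d)=E[Q_0(X,d(X))]$ in hand, maximizing over all measurable rules $d:\mathbb{X}\to\mathcal{A}$ reduces to maximizing the integrand separately at each $x$, because the value $d(x)$ may be chosen independently across covariate points and $Q_0(x,d(x))$ depends on $d$ only through its value at $x$. For fixed $x$ the choice is between $Q_0(x,1)$ and $Q_0(x,-1)$, whose difference is precisely $\tau(x)$; the larger is attained by $d(x)=1$ when $\tau(x)>0$ and by $d(x)=-1$ when $\tau(x)<0$, i.e. $d_0(x)=\text{sign}(\tau(x))$ whenever $\tau(x)\ne 0$.

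I expect the main obstacle to be the formal justification of the pointwise optimization step, namely that maximizing the integrand at each $x$ indeed produces a global maximizer of $E[Q_0(X,d(X))]$ within the class of admissible rules; this requires that no constraint couple the values of $d$ at distinct points and that the resulting $d_0$ be measurable. The case $\tau(x)=0$ is deliberately excluded from the statement because both treatments then yield the same conditional value, so the optimal rule is not unique at such $x$ and the sign function, which returns $0\notin\mathcal{A}$, fails to select a treatment there.
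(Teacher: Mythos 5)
Your proposal is correct and follows essentially the same route as the paper: the identity $V(d)=E\bigl[Q_0(X,d(X))\bigr]$ obtained via the tower property and cancellation of the propensity weights (which is exactly where the positivity assumption $p(a\mid x)>0$ enters), followed by pointwise comparison of $Q_0(x,1)$ and $Q_0(x,-1)$ through their difference $\tau(x)$. The one step you flag as an obstacle---rigorously justifying that pointwise maximization yields a global maximizer---is resolved in the paper by a short sandwich argument: $V(d_0)\le E[\max_{a\in\mathcal{A}}Q_0(X,a)]$ holds trivially since $Q_0(X,d_0(X))\le \max_{a\in\mathcal{A}}Q_0(X,a)$ pointwise, while plugging the argmax rule into $V$ gives the reverse inequality by optimality of $d_0$, so $V(d_0)=E[\max_{a\in\mathcal{A}}Q_0(X,a)]$ and hence $d_0(X)=\argmax_{a\in\mathcal{A}}Q_0(X,a)$ wherever the maximizer is unique, i.e.\ where $\tau(x)\ne 0$.
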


\begin{proof}
Note that upper case letters are used to denote random variables and
lower case letters for values of the random variables. Using Formula
(\ref{V}), we have 
\[
    V(d) = E\left[\frac{\mathds{1}_{d(X)=A}}{p(A|X)} E[Y|X,A]\right] = E\left[ \sum\limits_{a\in \mathcal{A}}\mathds{1}_{d(X)=a}Q_0(X,a)\right] = E\left[Q_0(X,  d(X))\right].
\]
The value for the optimal ITR $V(d_0) = E[Q_0(X,
  d_0(X))] \leq E[\max\limits_{a\in\mathcal{A}} Q_0(X,
  a)]$. Meanwhile by the definition of $d_0$,
\[V(d_0)
\geq V(d)|_{d(X) \in \argmax_{a\in\mathcal{A}} Q_0(X, a)} =
E[\max\limits_{a\in \mathcal{A} }Q_0(X, a)].
\]
Thus, $V(d_0) = E[\max\limits_{a\in \mathcal{A} }Q_0(X, a)] $ and the
optimal ITR satisfies $d_0(X)
=\argmax\limits_{a\in\mathcal{A}}Q_0(X,a)$. In other words, we have 
\[
d_0(x) = \left\{\begin{array}{ll}
1, & \text{if } Q_0(x, 1)>Q_0(x,-1),\\
-1, & \text{if } Q_0(x, 1)<Q_0(x,-1).
\end{array}
\right.
\]
That is,
\begin{equation} \label{sign}
  d_0(x)={\text{sign}} (\tau(x))
\end{equation}
for $x$ that $\tau(x) \ne 0$.
\end{proof}  

Lemma 1 indicates that building an optimal ITR can be achieved by
accurate estimation of the CATE. We can further justify this by
providing a relationship between the value function and the estimation
error. More specifically, \cite{Chen22} have showed that for
any ITR $d$,  the reduction in value is upper bounded by the
estimation error (See Lemma 1 in \cite{Chen22}):
\begin{equation} \label{upbound}
 V(d_0) - V(d) \leq C\left(E[(Q(X,A)-Q_0(X,A))^2]\right)^{1/2} ,
\end{equation}
where $Q(X,A)$ is a function $Q:\mathbb{X} \times \mathcal{A} \to
\mathbb{R}$ such that $d(X)=\argmax_{a\in 
  \mathcal{A}} Q(X, a)$. If we consider $Q(x,a)$ as an estimator of
$Q_0(x,a)$, then $\hat{\tau}(x) \triangleq Q(x, 1) - Q(x, -1)$ is the
estimated CATE function and we also have the corresponding 
ITR $d(x)={\text{sign}} (\hat{\tau}(x))$.

Intuitively, this upper bound implies that if the EMSE of $Q$ is
small, then the corresponding estimated ITR $d$ is closer to the
optimal ITR $d_0$ in terms of the value function. Formula
(\ref{upbound}) along with Lemma 1 explain the
connection between estimating the CATE and building an ITR. They
support the approach of minimizing the estimation error of the CATE
function and then setting the ITR as the sign function of the
estimator, i.e., $d(x)={\text{sign}} (\hat{\tau}(x))$. 

\section{Disconnection}
Despite the relationship, it is interesting to notice that in Formula
(\ref{upbound}), minimizing
the upper bound is quite different from minimizing the original value
function difference, $V(d_0) - V(d)$, as the upper bound may not be very
tight. In fact, we will formally demonstrate that improving the estimation of
the CATE function (with smaller EMSE) does not necessarily improve
ITR.

Besides the value function, we use a misclassification error to
evaluate ITRs. Formula (\ref{sign}) shows the optimal ITR $d_0$ is
${\text{sign}}(\tau(x))$, where $\tau(x)$ is the
true CATE function. For any ITR $d(x)$, which is also a binary
decision rule, we define the expected misclassification rate as
follows: 
\begin{equation}\label{error}
R(d)=P(d(X) \ne d_0(X)), 
\end{equation}
where $P$ is the marginal distribution of $X \in {\mathbb X}$ when
$(X, A, Y) \sim {\cal P}$.   
This error is based on the 0-1 loss, while the EMSE
is based on the quadratic loss. This mismatch is displayed in Figure
\ref{fig:loss}. For simplicity, imagine we have a 
true positive treatment effect located at 1 on the horizontal
axis. Any estimators of the treatment effects located on the positive side of
the horizontal axis give correct treatment decisions. That is, the sign
functions agree with the sign of the true treatment
effect. Analogously, any estimators on the negative  
side of the axis give incorrect treatment decisions based on the
sign functions. Consider two estimators of the treatment effects and
the corresponding ITRs from the sign functions. The expected misclassification
rates of the two ITRs  will not change if they fall on the same side
around 0, even though the estimators give very different quadratic
losses. 

\begin{figure}[ptbh] 
\centering
\includegraphics[width=0.5\linewidth]{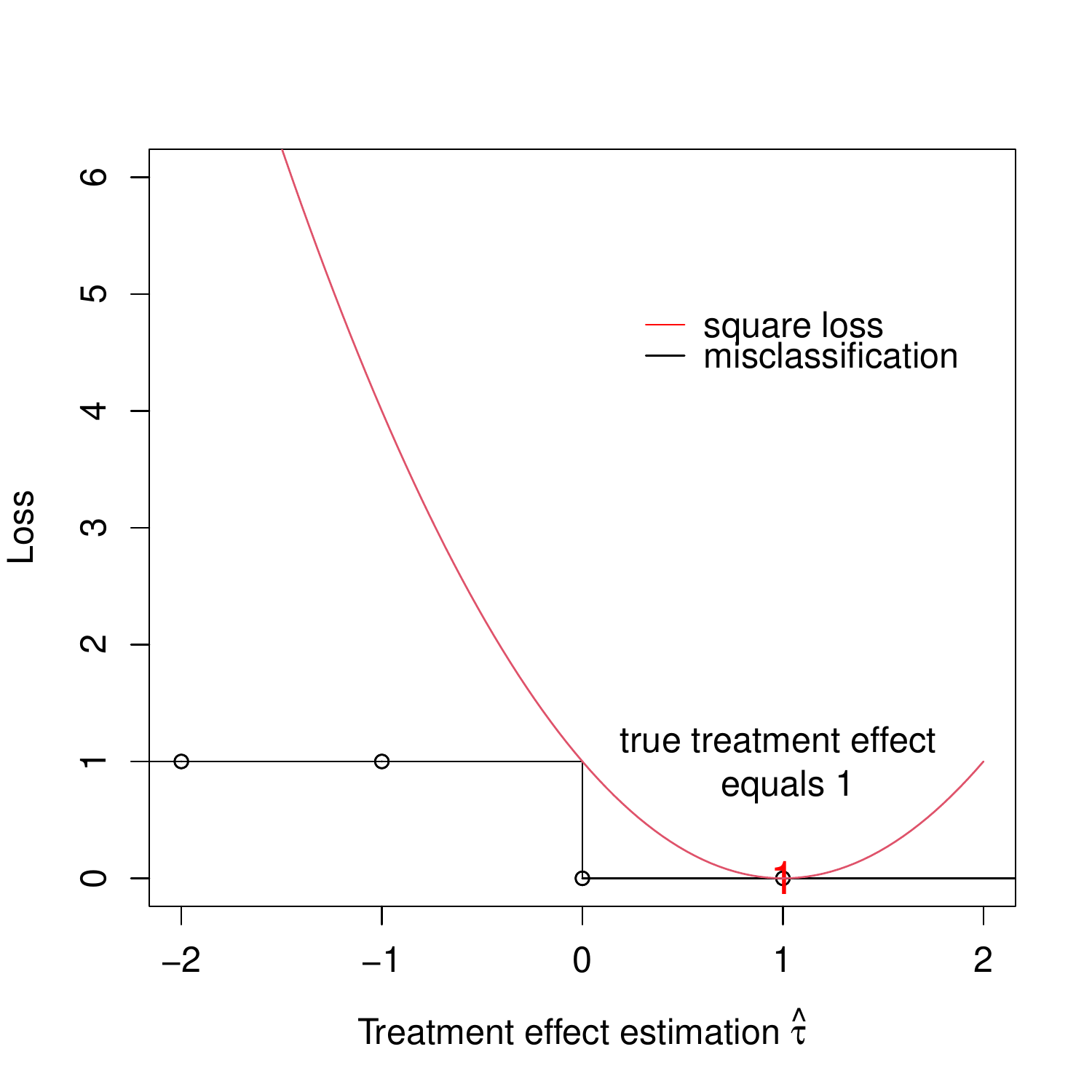}
\caption{\small{Mismatch of the two loss functions: square loss and 0-1 loss}}\label{fig:loss}
\end{figure}

Given the data
$(X_i, A_i, Y_i) \sim {\cal P}$, $i=1,\ldots, N$, with $m$ control
units and $n$ treated units, $N=m+n$,  we
consider two estimators of the CATE, $\hat{\tau}_1$ and
$\hat{\tau}_2$. For instance, $\hat{\tau}_1$ denotes the X-learner and
$\hat{\tau}_2= \hat{\mu}_1 - \hat{\mu}_0$ is from the ordinary least
squares estimators $\hat{\mu}_1$ and $\hat{\mu}_0$ as discussed in
Section 2. There are two corresponding ITRs as the sign functions, 
denoted as $\hat{d}_1$ and $\hat{d}_2$, respectively. We now evaluate
the CATE estimators and the ITRs by comparing them in terms of the
EMSE and the expected misclassification rate, respectively.

Essentially, for any given estimation method, its performance depends 
on the specific data distribution $(X_i, A_i, Y_i) \sim {\cal P}$. We
can talk about a property of ${\cal P}$ that will be held for a family
of the distributions ${\cal P}$. More specifically,
given two estimation methods $\hat{\tau}_1$ and
$\hat{\tau}_2$ as denoted above, we can define a family of distributions
as follows, 
\[
  S=\{{\cal P} \text{ such that EMSE}({\cal P}, \hat{\tau}_1) <
  \text{EMSE}({\cal P}, \hat{\tau}_2) \}.
\]
For example, for distribution
families that satisfy Conditions 1-6 as stated in \cite{Kunzel19}, we
know the EMSE of the X-learner is smaller than the other standard learner
methods as long as the sample sizes $m$ and $n$ are large enough (Theorem 1-2 in
\cite{Kunzel19}). In other words, the relationship,  $\text{EMSE}({\cal P},
\hat{\tau}_1) < \text{EMSE}({\cal P}, \hat{\tau}_2)$, is valid for a whole
distribution family, ${\cal P} \in S$, where ${\cal P}$ is the
underlying data-generating distribution.
We further define different distribution families that satisfy different
relationships between $\hat{\tau}_1$ and $\hat{\tau}_2$.
\begin{definition} \label{def:families}
Consider $(X_i, A_i, Y_i) \sim {\cal P}$, $i=1,\ldots, N$, with $m$
control units and $n$ treated units and $N=m+n$, $0<m$, $n<N$. For two
estimation strategies of the CATE function, 
$\hat{\tau}_1$ and $\hat{\tau}_2$, we define three distribution
families satisfying the following conditions:
\begin{enumerate}
\item[(a)] Let $S_0$ denote the set of all distributions ${\cal P}$ such that
  $\text{EMSE}({\cal P}, \hat{\tau}_1) < \text{EMSE}({\cal P},
  \hat{\tau}_2)$.
\item[(b)] Let $S_1$ denote the set of all distributions ${\cal P}$ such
  that $(\hat{\tau}_1(x)-\tau(x))^2 < (\hat{\tau}_2(x)-\tau(x))^2$
for every $x \in \mathbb X$. That is, the pointwise squared error
of $\hat{\tau}_1$ is smaller than $\hat{\tau}_2$.
\item[(c)] Let $S_2$ denote the set of all distributions ${\cal P}$ such
  that $R(\hat{d}_1) < R(\hat{d}_2)$. That is, the expected misclassification
  rate of $\hat{d}_1$ is smaller than $\hat{d}_2$, where
  $\hat{d}_1(x) ={\text{sign}}(\hat{\tau}_1(x))$ and $\hat{d}_2(x) ={\text{sign}}(\hat{\tau}_2(x))$.
\end{enumerate}
\end{definition}

\begin{lemma} \label{lemma2}
 Consider two estimation strategies for the CATE function, e.g., the
 X-learner and the standard ordinary least squares estimator denoted as
$\hat{\tau}_1$ and $\hat{\tau}_2$. Following Definition
\ref{def:families}, we have
\[
  S_2 \subset S_1 \subset S_0, 
\]
where $\subset$ denotes a strict subset.
\end{lemma}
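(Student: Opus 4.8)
The plan is to establish the two inclusions separately and then exhibit explicit distributions witnessing that each containment is strict. The inclusion $S_1 \subset S_0$ is the routine direction, which I would dispose of first by a monotonicity-of-expectation argument. The inclusion $S_2 \subset S_1$ is where the real content lies, and I expect it to be the main obstacle, because it forces a comparison between the sign-based $0$–$1$ loss defining $R$ and the magnitude-based quadratic loss defining membership in $S_1$.

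For $S_1 \subset S_0$, suppose $\mathcal{P}\in S_1$, so that $(\hat\tau_1(x)-\tau(x))^2 < (\hat\tau_2(x)-\tau(x))^2$ for every $x\in\mathbb{X}$. Taking expectation of both sides over $\mathcal{X}$, together with the independent estimation randomness in $\hat\tau_1,\hat\tau_2$, and using monotonicity of the expectation yields $\text{EMSE}(\mathcal{P},\hat\tau_1)<\text{EMSE}(\mathcal{P},\hat\tau_2)$, i.e. $\mathcal{P}\in S_0$. To see this inclusion is strict, I would construct a distribution concentrating most of its mass on a region where $\hat\tau_1$ is far more accurate than $\hat\tau_2$, while placing a small amount of mass at a point $x_0$ where $\hat\tau_2$ happens to be closer to $\tau$. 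The integrated EMSE still favors $\hat\tau_1$, so $\mathcal{P}\in S_0$, but the pointwise inequality fails at $x_0$, so $\mathcal{P}\notin S_1$, giving $S_1\subsetneq S_0$.

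For $S_2\subset S_1$, the natural route is the contrapositive: show that whenever the pointwise squared-error inequality fails somewhere, rule $\hat d_1$ cannot strictly beat $\hat d_2$ in misclassification. The crux, and the step I expect to be hardest, is that a misclassification event at $x$ is governed solely by whether $\text{sign}(\hat\tau_i(x))$ matches $\text{sign}(\tau(x))$, which is insensitive to the magnitude of $\hat\tau_i(x)-\tau(x)$, whereas membership in $S_1$ is a statement purely about that magnitude. A sign flip of $\hat\tau_1$ at a point with $\tau(x)>0$ forces $|\hat\tau_1(x)-\tau(x)|>|\tau(x)|$, but this need not exceed $|\hat\tau_2(x)-\tau(x)|$, so the desired implication does not follow from a pointwise magnitude comparison alone. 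I therefore anticipate that closing this step requires exploiting additional structure of the two estimation strategies beyond the bare orderings of Definition \ref{def:families}, for instance a hypothesis tying the location of sign flips to the relative squared-error magnitudes, and I would first isolate exactly what assumption on $\hat\tau_1,\hat\tau_2$ makes the sign information recoverable from the quadratic comparison.

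Finally, for strictness of $S_2\subset S_1$ I would produce a distribution in $S_1\setminus S_2$: place mass at a point where $\tau(x)>0$, let $\hat\tau_1(x)$ be negative but close to $\tau(x)$ (right magnitude, wrong sign) while $\hat\tau_2(x)$ is positive but farther from $\tau(x)$ (right sign, larger squared error). There the pointwise inequality of $S_1$ holds, yet $\hat d_1$ misclassifies while $\hat d_2$ does not, so $R(\hat d_1)>R(\hat d_2)$ and $\mathcal{P}\notin S_2$, yielding $S_2\subsetneq S_1$. This construction is precisely the concrete manifestation of the loss mismatch illustrated in Figure \ref{fig:loss}, and I expect it to be the mechanism the strictness claim is meant to encode; the same example is exactly the phenomenon that makes the $S_2\subset S_1$ direction delicate, so the two halves of the argument must be reconciled with care.
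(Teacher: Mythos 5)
Your handling of $S_1 \subset S_0$ coincides with the paper's proof: both integrate the pointwise squared-error inequality of Definition \ref{def:families}(b) to get the EMSE inequality of \ref{def:families}(a). Your witness constructions are in fact more explicit than the paper's, which asserts strictness of both inclusions without exhibiting any example. In particular, your $S_1\setminus S_2$ witness ($\tau(x)>0$ with $\hat{\tau}_1(x)$ negative but closer to $\tau(x)$, and $\hat{\tau}_2(x)$ positive but farther) is exactly Case 2 of the three-case pointwise analysis in the paper's appendix, so you have correctly reconstructed the mechanism the paper uses to justify strictness of $S_2 \subset S_1$.

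The genuine gap is that you never prove the inclusion $S_2 \subset S_1$ at all: you set up a contrapositive, observe (correctly) that it cannot be derived from the bare orderings in Definition \ref{def:families}, and then defer to unspecified ``additional structure'' of the estimators. The paper closes this step by a direct argument: fix ${\cal P}\in S_1$, assume WLOG $\tau(x)>0$, enumerate the three pointwise sign/magnitude configurations of $(\hat{\tau}_1(x),\hat{\tau}_2(x))$, note that $R(\hat{d}_1(x))<R(\hat{d}_2(x))$ occurs exactly in the configuration $\hat{\tau}_2(x)<0<\hat{\tau}_1(x)$ (Case 3), integrate over $x$, and conclude that $S_2$ ``corresponds to'' Case 3, hence ${\cal P}\in S_2$ implies ${\cal P}\in S_1$. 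That said, your skepticism is mathematically well founded: read literally, the definitions admit, e.g., $\tau\equiv 1$, $\hat{\tau}_1\equiv 10$, $\hat{\tau}_2\equiv -0.1$, for which $R(\hat{d}_1)=0<1=R(\hat{d}_2)$ while the pointwise inequality defining $S_1$ fails everywhere; the paper's enumeration is carried out entirely under the hypothesis ${\cal P}\in S_1$ and therefore characterizes $S_2\cap S_1$ rather than $S_2$ itself. So you have put your finger on a real weakness in the intended argument, but as a proof attempt your writeup is incomplete: it stops short of either supplying the paper's case enumeration (with whatever implicit restriction makes it go through) or formulating the extra hypothesis you correctly sense is needed.
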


The proof is provided in the Appendix.

\begin{remark}
  Lemma \ref{lemma2} indicates that there are many situations where
improving a CATE estimator does not change ITRs. In fact, when 
$\hat{\tau}_1(x)$ and $\hat{\tau}_2(x)$ have the same sign for $x \in
\mathbb X$, the corresponding ITRs are exactly the same with no
improvement. 
\end{remark} 
  
\begin{remark}
Lemma \ref{lemma2} suggests a practice guideline when we apply
different estimation methods for the CATE function. 
We really need to examine the sign functions of different
estimators. When there is no change of signs between two estimators,
or the changes are 
minimal, we should choose the estimation method based on the
computational cost instead of the EMSE accuracy. 
\end{remark}  

Note that the
X-learner is always computationally more expensive than other
estimation algorithms, because the X-learner requires several additional
computation steps of data imputation. We should not
use the X-learner when there is minimal difference between
the sign functions of the X-learner and that of the other estimation
methods.

\section{Simulations}
We study two simulation scenarios as specified in Table
\ref{tab:scenario}. The first scenario demonstrates a case when a better
CATE estimation leads to a better ITR. The second 
scenario, on the other hand, demonstrates when a better estimation does not
lead to a better ITR. We generate samples for the treatment and
control arms independently. We simulate a covariate $X_i
\sim \text{Unif}[-1, 1]$ and then the response 
$Y_i=\mu(X_i)+\epsilon_i$, where $\mu(x)=\mu_1(x)$ 
for the treatment arm, and $\mu(x)=\mu_0(x)$ for the
control arm, and
$\epsilon_i$ is the random error following a normal distribution
with mean zero and standard deviation 0.01. In the training data, we
vary sample sizes 
of the control group 
and the treatment group,  $m= 200, 400, 600, 800$ and $n=10, 20, 30,
40$. In the testing data, $m=1000$ and $n=50$. Our simulations mimic
the situations where the number of
samples in the control group is much larger than that in the treatment group.
\begin{table}[ptbh]
\caption{\small{Two scenario settings in the simulation studies. } } 
\label{tab:scenario}
\vskip5mm\centering\tabcolsep4mm
\begin{tabular}{cc|cc}\hline
\multicolumn{4}{c}{Scenario 1}                   \\\hline	
treatment $A=1$	 &  &	control $A=-1$&	\\\hline
range of x	& response function &	range of x &	response function\\\hline
$-1< x <-0.4$&	$\mu_1(x)=1.1$&	$-1< x <-0.4$&	$\mu_0(x)=1.0$\\\hline
$-0.4< x< 0.5$&	$\mu_1(x)=1.6$&	$-0.4< x< 0.5$&	$\mu_0(x)=1.5$\\\hline
$0.5< x <1.0$&	$\mu_1(x)=1.1$&	$0.5< x <1.0$&	$\mu_0(x)=1.0$\\\hline
\multicolumn{4}{c}{Scenario 2}                   \\\hline	
treatment   $A=1$&  &	control $A=-1$	&\\\hline
range of x& 	response function&	range of x&	response function\\\hline
$-1< x <-0.4$&	$\mu_1(x)=1.6$&	$-1< x <-0.4$&	$\mu_0(x)=1.0$\\\hline
$-0.4< x< 0.5$&	$\mu_1(x)=2.1$&	$-0.4< x< 0.5$&	$\mu_0(x)=1.5$\\\hline
$0.5< x <1.0$&	$\mu_1(x)=1.6$&	$0.5< x <1.0$&	$\mu_0(x)=1.0$\\\hline
\end{tabular}
\end{table}

We consider two machine learning methods for estimating the CATE: the
first estimator is the X-learner, denoted as $\hat{\tau}_X$.
The second estimator, denoted as $\hat{\tau}$, is a
standard estimator $\hat{\tau}= \hat{\mu}_1 - \hat{\mu}_0$. For
both estimators, we use the simple linear regression to estimate the
response function $\mu_1$ in the treatment group, as the sample size is small.
We, however, use a locally weighted regression method, LOESS
\cite{Cleveland88}, to estimate $\mu_0$.
We compare the two methods across four aspects: 1) EMSE: empirical mean squared error calculated as
$\sum_{i=1}^{N}(\hat{\tau}(X_i)-\tau(X_i))^2/N$; 2)  misclassification
rate: the percentage of the number of misclassified treatments, where
the misclassified treatment means 
the recommended treatment disagrees with the truth; 3) empirical
value function: an estimate of the value function as in
\cite{Qian11,Zhao12}; and 4) computational time.  
 
The plots on the left column in Figure \ref{fig:sim} (A,C,E,G)
demonstrate the 
connection between treatment effect estimation and ITR. In this 
scenario, a better treatment effect estimation results in a better
ITR. In particular, the
performances of $\hat{\tau}_X$ are better than $\hat{\tau}$
in terms of smaller EMSE, lower
misclassfication rate, and higher value function. The improvements
are bigger with larger sample sizes. On the other
hand, the plots on the right column in Figure \ref{fig:sim} (B,D,F,H)
illustrate the disconnection between treatment effect estimation and
ITR. That is, although the 
EMSE with $\hat{\tau}_X$ is smaller than that with $\hat{\tau}$, the
misclassification rate and the value function are almost
the same using $\hat{\tau}_X$ and $\hat{\tau}$. In other words, a
better treatment effect estimation does not result in a better ITR in
this scenario. 

\begin{figure}[ptbh]
\centering
\includegraphics[width=0.9\linewidth]{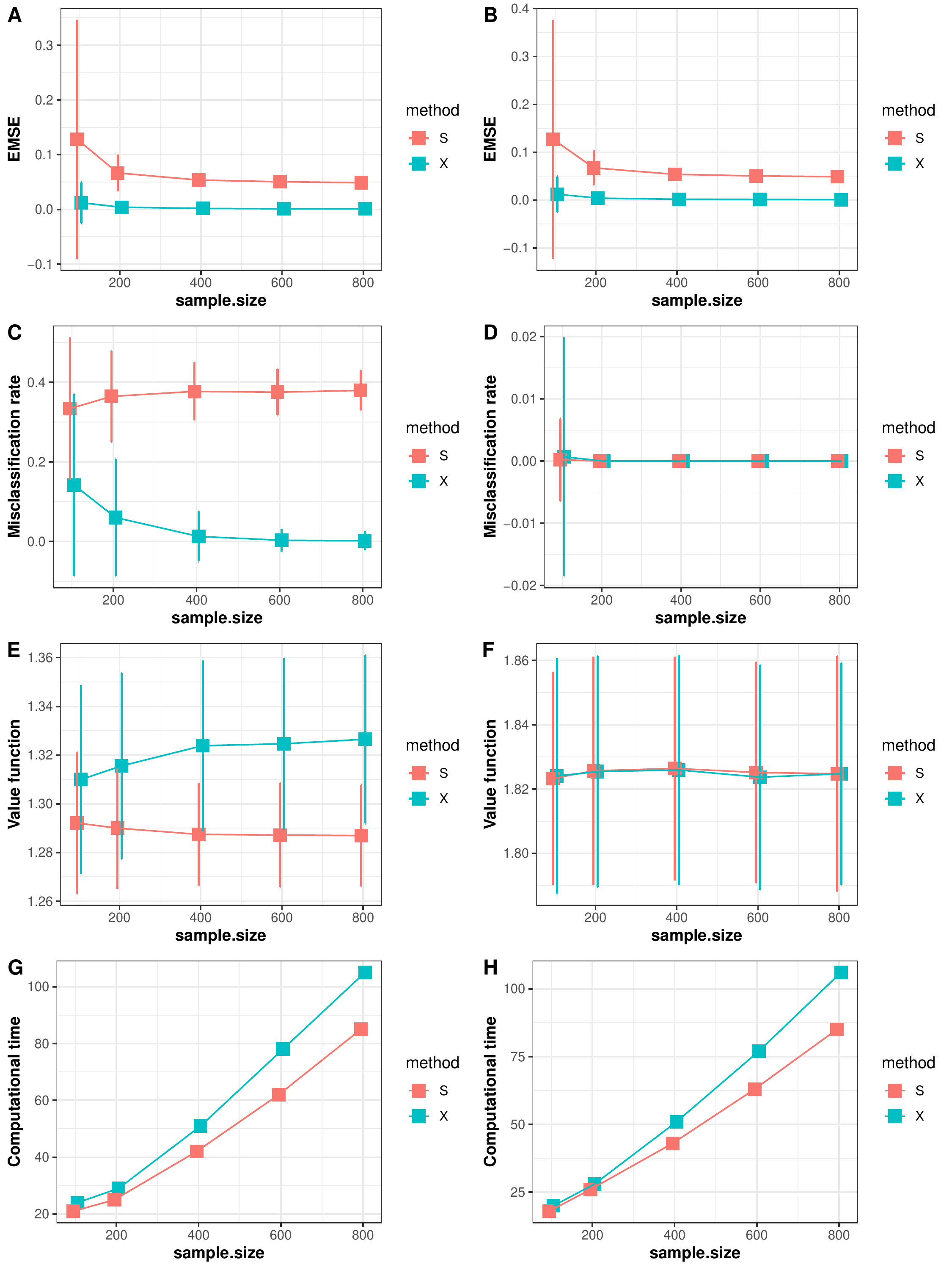}
\vskip-4mm 
\caption{\small{Comparison of estimator $\hat{\tau}_X$ (X) from the X-learner and standard estimator $\hat{\tau}$ (S) in terms of 
: A.B. Empirical mean square error (EMSE) of treatment estimator; C.D. Misclassification rates; E.F. Empirical value function; G.H. Computational time (unit: seconds) from 1000 simulations. left column: scenario 1; right column: scenario 2.}}\label{fig:sim}
\end{figure}

It is important to notice that the computational time of
$\hat{\tau}_X$ is longer than that of $\hat{\tau}$ for both scenarios.
In the second scenario, since $\hat{\tau}_X$ does not improve ITR, we should
not use $\hat{\tau}_X$ but prefer the standard estimation method for
ITR, with more efficient computation.

\section{Discussion}
The literature on estimation of heterogeneous treatment effects mainly uses
the EMSE for evaluations of estimation methods. If our ultimate goal
is to make treatment recommendations, we show that the evaluation
results from the HTE literature are not adequately useful. This
conclusion may be surprising but will enable us to properly choose
an estimation method for obtaining ITRs.

\section*{Appendix: Proof of Lemma 2}
\begin{proof}
Following Definition \ref{def:families}(a) and \ref{def:families}(b),
and the EMSE defintion, 
$\text{EMSE}({\cal P}, \hat{\tau})= E[(\hat{\tau}({\cal X})-\tau({\cal
  X}))^2]$, it 
is straightforward that for any ${\cal P} \in S_1$ we must have ${\cal P} 
\in S_0$. Therefore, we  
obtain $S_1 \subset S_0$ and $S_1$ as a strict subset of $S_0$.

Next, for an ITR denoted as $d(x)$ and any $x \in \mathbb X$, define
\[
R(d(x)) = \left\{\begin{array}{ll}
1, & \text{if } d(x) \ne d_0(x),\\
0, & \text{if } d(x) = d_0(x),
\end{array}
\right.   
\]
where $d_0$ is the optimal ITR and $d_0(x)={\text{sign}}
(\tau(x))$. That is, $R(d(x))$ is the pointwise misclassification
error of the ITR $d$ at $x \in \mathbb X$. 

For ${\cal P} \in S_1$, from Definition \ref{def:families}(b), we know
$(\hat{\tau}_1(x)-\tau(x))^2 < (\hat{\tau}_2(x)-\tau(x))^2$ for any
$x \in \mathbb X$. Without loss of generality, assume
$\tau(x)>0$. There are three different cases for 
the misclassification error of $\hat{d}_1(x)$ and $\hat{d}_2(x)$:
\begin{enumerate}
\item $R(\hat{d}_1(x)) = R(\hat{d}_2(x))$, if $\hat{\tau}_1(x)$ and
  $\hat{\tau}_2(x)$ are on the same side of the origin, either both
  positive or both negative, and $|\hat{\tau}_1(x)-\tau(x)| <
  |\hat{\tau}_2(x)-\tau(x)|$. 
\item $R(\hat{d}_1(x)) > R(\hat{d}_2(x))$, if
  $|\hat{\tau}_1(x)-\tau(x)| < |\hat{\tau}_2(x)-\tau(x)|$ but
  $\hat{\tau}_1(x)<0<\tau(x)<\hat{\tau}_2(x)$.
\item $R(\hat{d}_1(x)) < R(\hat{d}_2(x))$, if
  $\hat{\tau}_2(x)<0<\hat{\tau}_1(x)<\tau(x)$, or
  $\hat{\tau}_2(x)<0<\tau(x)<\hat{\tau}_1(x)$ and
  $|\hat{\tau}_1(x)-\tau(x)| < |\hat{\tau}_2(x)-\tau(x)|$. 
\end{enumerate}
Integrating $R(\hat{d}_1(x))$ and 
$R(\hat{d}_2(x))$ over $x \in \mathbb X$ with respect of the marginal
distribution of $X$, we obtain the misclassification errors for the two 
ITRs $\hat{d}_1$ and $\hat{d}_2$, respectively. The distribution family 
$S_2 = \{{\cal P} \text{ such that } R(\hat{d}_1)<R(\hat{d}_2) \}$ 
corresponds to Item 
3 from the list. Therefore, for any ${\cal P} \in S_2$ we must have ${\cal 
P} \in S_1$. We obtain $S_2 \subset S_1$ and $S_2$ as a strict subset.

\end{proof}


\bibliographystyle{plainnat}

\end{document}